\documentclass[aps,pra,twocolumn,showpacs,superscriptaddress]{revtex4-1}
\usepackage{amssymb,amsmath,amsfonts,bm,amsbsy,latexsym,amsthm,dsfont,array,layout,mathrsfs,mathdots,ucs}
\usepackage[normalem]{ulem}
\usepackage[usenames,dvipsnames]{color}
 

\newcommand{\ket}[1]{\left|#1\right\rangle}
\newcommand{\bra}[1]{\left\langle #1\right|}
\newcommand{\bracket}[2]{\left\langle #1|#2\right\rangle}
\newcommand\defn[1]{\textsl{#1}}

\newcommand\ketbra[2]{|#1\rangle\langle#2|}
\newcommand\cM{{\mathcal M}}
\newcommand\cN{{\mathcal N}}

\newcommand\cE{{\mathcal E}}

\newcommand{\proj}[1]{\ket{#1}\bra{#1}}

\newtheorem{lemma}{Lemma}

\newcommand\C{\hbox{$\mit I$\kern-.7em$\mit C$}}
\newcommand\R{\hbox{$\mit I$\kern-.6em$\mit R$}}


\begin{document}

\title{Quantum Resource Theory for Charge-Parity-Time Inversion}

\author{Michael Skotiniotis}
\email[]{michael.skotiniotis@uibk.ac.at}
\affiliation{Institut f\"ur Theoretische Physik, Universit\"at Innsbruck, Technikerstr.\ 25, A-6020 Innsbruck, Austria}
\author{Borzu Toloui}
\affiliation{Department of Physics, Haverford College, 370 Lancaster Avenue, Haverford, PA 19041, United States}
\author{Ian T. Durham}
\email[]{idurham@anselm.edu}
\affiliation{Department of Physics, Saint Anselm College, Manchester, NH 03102}
\author{Barry C. Sanders}
\email[]{sandersb@ucalgary.ca}
\affiliation{Institute for Quantum Science and Technology,
	University of Calgary, Alberta T2N 1N4, Canada}
\affiliation{Program in Quantum Information Science, Canadian Institute for Advanced Research, Toronto, Ontario M5G 1Z8, Canada}
\date{\today}

\begin{abstract}
We develop a complete resource theory of charge-parity-time (CPT) inversion symmetry for both massive and massless relativistic particles of arbitrary spin. We show that a unitary representation of CPT can be consistently constructed for all spins and develop the resource theory associated with CPT super-selection, thereby identifying and quantifying the resources required to lift the super-selection rule.
\end{abstract} 

\pacs{03.67.-a,03.67.Hk,11.30.Er,11.30.Fs}

\maketitle

\section{Introduction}
\label{sec:intro}

Due solely to Lorentz invariance and the linearity of quantum mechanics, physical laws are invariant under charge,
parity, and time inversion, 
thus making charge-pariy-time  (CPT) symmetry one of the most fundamental symmetries in physics~\cite{Sch51,Lud54,Pau55}.  
Experimental evidence overwhelmingly supports the notion that physical laws are Lorentz invariant~\cite{Gre02} and thus also CPT invariant~\cite{KR11}.

Symmetry in quantum mechanics is mathematically described by a group~$G$ of transformations.
This symmetry implies the presence of invariant states which lead to restrictions on the allowed set of states for the quantum mechanical system.
Specifically, these restrictions give rise to a super-selection rule (SSR)~\cite{Wick:1952ys},
which forbids coherent superpositions of certain quantum states.

SSRs are formally equivalent to the lack of a \textit{quantum} frame of reference~\cite{BRS07}.
Examples of SSRs include those associated with charge~\cite{Wick:1952ys,AS67},
orientation~\cite{BRS03},
chirality~\cite{Wick:1952ys,Collins:2005uq,Gisin:2004fk},
and phase~\cite{AS67,Rudolph:2001uq,SBRK03}.
SSRs may be circumvented by employing the reference frame resource known as frameness
which quantifies the degree for which a given resource is inherent in a shared reference frame.
Such frameness resources correspond
to states that are asymmetric with respect to~$G$~\cite{Gour:2008vn}.

The fundamental nature of CPT symmetry implies the existence of invariant states,
which in turn imply the existence of an SSR.
As with other SSRs,
the CPT-SSR can be circumvented by consuming appropriate frameness resources.
The CPT-SSR thus corresponds to the lack of a quantum reference frame between matter and antimatter.
By treating CPT as an indecomposable symmetry,
in accordance with the Feynman-Stueckelberg interpretation of anti-matter~\cite{Stueckelberg:1942vn,Fey48,Fey49a,Fey49b},
 CPT-SSR and frameness have been analyzed for the cases of massive and massless spin
 $s=0,\;\frac{1}{2}$,
 and $1$ particles as well as for Majorana fermions and Weyl spinors~\cite{STDS13}.

In this treatment,
CPT symmetry was represented by a matter-to-antimatter unitary operator with different forms depending on~$s$ and whether a given particle possesses mass or not,
but the analysis was restrictive in that it pertained only to low-spin cases with definite three-momentum~$\bm p$.

Here we develop a fully general theory of CPT frameness beyond the
 spin and momentum restrictions of the previous work~\cite{STDS13}.
Specifically we employ the Bargmann-Wigner equations~\cite{BW48} to construct 
unitary CPT maps for generalized relativistic Dirac equations with arbitrary~$s$ and 
extend the unitary operator domain from a single value of thee-momentum~$\bm p$
to the full continuum~$\bm{p}\in\mathbb{R}^3$. 
Furthermore, we show that CPT requires a unitary representation
to yield a consistent, unambiguous resource theory.
Our general theory of CPT frameness uses an information-theoretic operational measure
to distinguish states without frameness resources from those with these resources.

The outline of the paper is as follows.  In Sec.~\ref{sec:SSRs} we review the restrictions arising from a general SSR associated with a symmetry group $G$ of transformations and the equivalent resource theory.  We also prove that in order for any resource theory to be consistent and unambiguous in the classification of its resources, the representation of $G$ must necessarily be unitary.

In Sec.~\ref{sec:unitarity} we make use of the Bargmann-Wigner equations and thereby provide an algorithmic procedure for constructing unitary representations of the CPT operator for massive (Sec.~\ref{CPTrqm}) and massless (Sec.~\ref{CPTrql}) relativistic particles of spin $s>1$ and fixed three-momentum~$\bm p$.
Our procedure is prescriptive and allows one to identify CPT invariant and non-invariant states for all relativistic particles of  spin $s>1$. In Sec.~\ref{uniextend} we prove that there exists a unitary extension of the CPT operator for relativistic particles of arbitrary spin and general three-momentum~$\bm p$. 

In Sec.~\ref{sec:resourcetheory} we develop the full resource theory for CPT-SSR, distinguish between non-resource and resource states, and establish a hierarchy of resources for CPT-SSR by quantifying the amount of information such resources carry about the requisite matter-antimatter reference frame. In Sec.~\ref{sec:quantuminfo} we demonstrate that quantum information processing can be performed despite CPT-SSR restrictions.  Finally, Sec.~\ref{sec:conc} contains our summary and conclusions.

\section{\label{sec:SSRs} Superselection and quantum frameness}

In this section we briefly summarize the resource theory due to restrictions imposed by the lack of a frame of reference 
associated with some group~$G$~\cite{BRS07}.
In particular, we provide the necessary mathematical background required to formulate such a resource theory and identify  resources and non-resources for such a theory. 

SSRs were introduced as axiomatic restrictions to quantum theory, 
namely forbidding certain coherent superpositions of quantum states or measurements~\cite{Wick:1952ys,WickWW:70}. 
For~$C$---a conserved quantity represented by Hermitian operator~$\hat{C}$---and for~$\{\ket{c_i}\}$---the corresponding eigenbasis for the Hilbert space~$\mathscr{H}$---a $C$-based SSR states that, for all observables $\mathcal{O}$, 
\begin{equation}
	\bra{c_i}\mathcal{O}\ket{c_j}=0.
\end{equation}
Consequently, any coherent superposition
\begin{equation}
	\ket{\phi}=\alpha\ket{c_i}+\beta\ket{c_j}\in\mathscr{H}
\end{equation}
is operationally indistinguishable from the mixed state
\begin{equation}
	\rho=|\alpha|^2\proj{c_i}+|\beta|^2\proj{c_j}\in\mathcal{S}(\mathscr{H})
\end{equation}
for~$\mathcal{S}(\mathscr{H})$, the set of trace-class positive operators corresponding to physical states~\cite{CMNS:98}.
Thus, the only states $\rho\in\mathcal{S}(\mathscr{H})$ that can be prepared in the presence of a $C$-based SSR are those for which~\cite{CMNS:98}
\begin{equation}
	[\hat{C},\rho]=0.
\end{equation}

As coherent superpositions of different eigenstates of $\hat C$ can neither be observed nor prepared, the Hilbert space $\mathscr{H}$ of any quantum system subject to an SSR can be conveniently written as  
\begin{equation}
	\mathscr{H}\cong\bigoplus_c \mathscr{H}^{(c)};
	\mathscr{H}^{(c)}:=\{\ket{c_i};\, \hat{C}\ket{c_i}=c\ket{c_i}\}
\label{Eq:blockdiagonal}
\end{equation}
with $\mathscr{H}^{(c)}$ the eigenspaces, or charge sectors,
corresponding to distinct eigenspaces of the operator $\hat{C}$.  
In addition, a SSR also imposes restrictions on the types of operations that can be performed. Specifically, the only allowed unitary transformations, $U$, that can be performed under an SSR are those that satisfy~\cite{BRS07}
\begin{equation}
	[U,\hat{C}]=0.
\end{equation}
We now show that the restrictions imposed by the lack of a requisite frame of reference are formally equivalent to an SSR.  

A reference frame is a physical system whose degrees of freedom possess an inherent asymmetry with respect to a particular group of transformations.  Hence, lacking a reference frame for a particular degree of freedom is tantamount to having a symmetry with respect to a group of transformations which, by Noether's theorem~\cite{Noe18,*Tavel:71},
implies that there exists a corresponding conserved quantity.

In quantum frameness theory,
the SSR restriction arises due to action by the group of transformations~$G$
of the reference frame~\cite{BRS07}. 
The Hilbert space~$\mathscr{H}$ carries a representation~$T$ of~$G$. 
By a representation
we refer to the mapping $T: G\rightarrow \mathcal{S}(\mathscr{H})$,
which preserves the group structure:
\begin{equation}
\label{Eq:rep}
	T\left(g^{-1}\right)
		=T^{-1}(g),\;
	T(g g')
		=T(g) T(g'),\;
	\forall g,g'\in G.
\end{equation}
The Hilbert space carries a representation $T$ of $G$,  
\begin{equation}
	\mathscr{H}\cong\bigoplus_{\lambda}\mathscr{H}^{(\lambda)},
\label{Eq:irrep}
\end{equation} 
with~$\lambda$ denoting irreducible representations~$T^{(\lambda)}$ present in~$T$,
and $\mathscr{H}^{(\lambda)}$ denoting the irreducible subspaces that carry $T^{(\lambda)}$.
Equation~\eqref{Eq:irrep} is identical to Eq.~\eqref{Eq:blockdiagonal},
with~$\lambda$ denoting the value of the conserved quantity associated with the requisite frame of reference.

When lacking a frame of reference, the only states that can be prepared are those that satisfy  
\begin{equation}
\label{Eq:nonresources}
T(g) \rho T^{-1}(g)=\rho, \;\;\; \forall g \in G. 
\end{equation}
Quantum states that obey Eq.~\eqref{Eq:nonresources} are called $G$ invariant and constitute the set of non-resource states.

Any state not satisfying Eq.~\eqref{Eq:nonresources} possesses an inherent asymmetry with respect to the group of transformations $\{T(g):\,g\in G\}$ and is therefore a ``frameness'' (or ``asymmetry'') resource~\cite{BRS07}.
Similarly, the only allowable state transformations are those that commute with every element of the representation $T$.
Such transformations are called $G$ covariant maps~\cite{BRS07}.
A map
\begin{equation}
	\cE: \mathcal{S}(\mathscr{H})\rightarrow\mathcal{S}(\mathscr{H})
\end{equation}
is $G$-covariant if  
\begin{equation}
\label{eq:gcovariant}
T(g) \cE(\rho) T^{-1}(g)=\cE\left(T(g) \rho T^{-1}(g)\right)\forall\rho\in\mathcal{S}(\mathscr{H}).
\end{equation}

A resource theory's primary objective is to identify resources and to separate them from non-resources 
subject to a set of SSR restrictions.
Resources are distinguished from non-resources by the fact
that resources cannot be generated from non-resources via SSR-restricted operations~\cite{Gour:2008vn}.
Non-resources can, however, be generated from resources under SSR restrictions.
Thus, the restriction on transformations creates two categories of states: non-resources that only undergo reversible transformations among each other; and resources that can undergo irreversible transformations in a sense being consumed, thus rendering them non-resources~\cite{BRS07,Gour:2008vn}. 

Consistency of a resource theory implies that a $G$-invariant state cannot be transformed via $G$-covariant maps to a state that is not $G$-invariant~\cite{Gour:2008vn}.
We now proceed to show that a resource theory 
employing a unitary representation~$T$ of~$G$ is consistent.
Furthermore the distinction between resource states and non-resource states is unambiguous and always completely consistent.

In quantum mechanics, a system's conservative dynamics is represented by a unitary transformation.
A representation~$T$ of some group~$G$ corresponding to dynamical symmetry is
unitary if and only if each operator $T(g); g\in G$ $\forall G$ is unitary~\cite{CPW02}.
Then Eq.~\eqref{eq:gcovariant} implies that the transformation of a state~$\rho$ by a unitary operator~$U$ satisfies
\begin{equation}
	T(g)U(\rho)=U(\rho) T^{\dagger}(g)\;\forall g\in G.
\end{equation}
As this equality holds for all states~$\rho$, we have
\begin{equation}
	[T(g),U]=0\;\forall g \in G.
\end{equation}

Now suppose the system, initially in a $G$-invariant state $\rho_0$, evolves under 
the action of a Hamiltonian $\hat H$. As the restriction applies to every operation, including infinitesimal ones, the Hamiltonian must also commute with the representation;
i.e.
\begin{equation}
\label{eq:hamiltonianrestriction}
[T(g),\hat{H}]=0 \; \forall\: g \in G.
\end{equation} 
Let us assume that the pure state $\rho_0=\ket{\psi (t_0)}\bra{\psi(t_0)}$ is $G$ invariant.
After a finite time~$t_1$,
the state evolves to 
\begin{equation}
	\rho(t_1)=\text{e}^{- \text{i}\hat{H} (t_1-t_0)}  \rho_0 \text{e}^{+ \text{i}\hat{H} (t_1-t_0)}.
\end{equation}
As $T(g)$ is unitary,
Eq.~\eqref{eq:hamiltonianrestriction} implies
\begin{equation}
\label{eq:c}
	T(g) \text{e}^{-\text{i}\hat{H} (t_1-t_0)} = \text{e}^{-\text{i}\hat{H}(t_1-t_0)} T(g) \;\forall\: g \in G,  
\end{equation}
and 
\begin{align}\nonumber
\label{eq:c2}
	T(g) \rho (t_1) T^{\dagger}(g)
		=& T(g) \text{e}^{-\text{i}\hat{H}(t_1-t_0)} \rho_0
			\text{e}^{+\text{i}\hat{H} (t_1-t_0)} T^{\dagger}(g)\\ \nonumber
		=& \text{e}^{-\text{i}\hat{H}(t_1-t_0)}  T(g) \rho_0 T^{\dagger}(g) \text{e}^{+\text{i}\hat{H}(t_1-t_0)}\\
		=&\rho(t_1).
\end{align}
In other words, the state at $t_1$ remains $G$ invariant and is thus a non-resource state. 

We now consider the case for which the operator~$T(g)$ is anti-unitary for some $g \in G$.
An anti-unitary operator $A: \mathscr{H} \rightarrow \mathscr{H}$ is an anti-linear map such that,
for any two states $\ket{\psi},\,\ket{\phi}\in\mathscr{H}$,
\begin{equation}
\label{eq:anti-unitary}
\left\langle A\psi\right. \ket{A \phi}=\left\langle\psi\right.\ket{\phi}^{*}. 
\end{equation} 
The adjoint $A^{\dagger}$ of an anti-unitary operator is defined as 
\begin{equation}
\label{eq:anti-unitaryadjoint}
\left\langle A^{\dagger}\psi\right. \ket{\phi}:=\left\langle\psi\right.\ket{A\phi}^{*}
\end{equation} 
and is also anti-unitary with $A^{\dagger} A=A A^{\dagger}=\mathds{1}$~\cite{Carinena:1981, Ballentine:1998}. 

Consider then the evolution of an initially pure $G$-invariant state, $\ket{\psi (t_0)}$, under the action
of Hamiltonian~$\hat{H}$, that commutes with the anti-unitary operator $T(g)$. 
A finite time $t_1$ later, the state becomes 
\begin{equation}
	\ket{\psi (t_1)}=\text{e}^{-\text{i}\hat{H}(t_1-t_0)}  \ket{\psi (t_0)}.
\label{eq:psit1-}
\end{equation}
Using Eq.~\eqref{eq:hamiltonianrestriction},
expanding the exponential $\text{e}^{-\text{i}\hat{H}(t_1-t_0)}$ in powers of $\hat{H}$,
and employing the fact that
\begin{equation}
	T(g)\left(i\hat{H}\right)=-\left(i\hat{H}\right)T(g)
\end{equation}
for anti-unitary $T(g)$,
we deduce that
\begin{equation}
	T(g) \text{e}^{-\text{i}\hat{H}(t_1-t_0)} = \text{e}^{+\text{i}\hat{H}(t_1-t_0)} T(g) \; \forall\: g \in G
\end{equation}
and
\begin{align}\nonumber
\label{eq:d2}
	T(g) \ket{\psi (t_1)}
		=&T(g) \text{e}^{-\text{i}\hat{H}(t_1-t_0)} \ket{\psi (t_0)}\\
		=&\text{e}^{+\text{i}\hat{H}(t_1-t_0)}  T(g) \ket{\psi (t_0)}.
\end{align}

As
\begin{equation}
	T(g) \ket{\psi (t_0)}=\ket{\psi (t_0)},
\end{equation}
in order for $\ket{\psi (t_1)}$  to remain $G$-invariant, 
\begin{align}
	\ket{\psi (t_1)}=T(g) \ket{\psi (t_1)}=\text{e}^{+ i H (t_1-t_0)}\ket{\psi (t_0)}
\label{eq:psit1+}
\end{align}
must hold.
Equations~(\ref{eq:psit1-}) and~(\ref{eq:psit1+}) imply 
\begin{equation}
\label{eq:d4}
	\ket{\psi (t_0)}=\text{e}^{- 2\text{i}\hat{H}(t_1-t_0)}  \ket{\psi (t_0)}\;\forall t_1\in\mathbb{R},
\end{equation}
which can only be true for stationary states.

In fact, not all $G$-invariant states can both be stationary and satisfy Eq.~\eqref{eq:d4}.
Thus, choosing the $G$-invariant state $\ket{\psi (t_0)}$ to be non-stationary,
i.e., not an eigenstate of the Hamiltonian,  then~$\ket{\psi(t_0)}$ evolves to a different state,
$\ket{\psi (t_1)}(\neq \ket{\psi (t_0)})$. 
Hence, we have shown that  
\begin{equation}
	T(g) \ket{\psi (t_1)}\neq \ket{\psi (t_1)}
\end{equation}
which implies that $\ket{\psi (t_1)}$ becomes a resource state for any $t_1>t_0$. 

We have therefore demonstrated that,
in the presence of an anti-unitary group operator, the distinction between resource states and non-resource states is violated, and a non-resource state at a given time can evolve  into a resource state  under dynamics that satisfy the restrictions. 
A similar argument holds for mixed states
\begin{align}\nonumber
	T(g) \rho (t_1) T(g)
		=& T(g) \text{e}^{-\text{i}\hat{H}(t_1-t_0)} \rho(t_0) \text{e}^{+\text{i}\hat{H}(t_1-t_0)} T^{\dagger}(g)\\ 
		=&\text{e}^{+\text{i}\hat{H}(t_1-t_0)}  \rho(t_0) \text{e}^{-\text{i}\hat{H}(t_1-t_0)}
				\nonumber\\
		\neq & \rho(t_1).
\label{eq:mixed}
\end{align}

Devising a resource theory associated with an anti-unitary representation faces another related problem. A resource theory classifies resources and orders them from strong to weak~\cite{Gour:2008vn}.
This task of ordering resources by strength entails quantifying resources in a consistent manner.
A highly resourceful state can be transformed to a less resourceful state under the SSR but not the reverse~\cite{BRS07, Gour:2008vn}.

For a consistent resource theory, the relative strength of resources can be quantified in terms of real, positive monotone functions of the states known as resource measures~\cite{Gour:2008vn}. 
For quantum reference frames and their associated symmetry groups,
the relevant measure is the frameness, or asymmetry, measure, $F: \mathcal{S}(\mathscr{H})\rightarrow \mathbb{R}^+$, which must remain the same under any reversible transformation. In particular, 
\begin{equation}
\label{eq:framenessmeasure}
F\left(T(g) \rho T^{\dagger}(g)\right) \equiv F(\rho), \;\;\; \forall g \in G. 
\end{equation}

The value of a measure for a given state in quantum theory cannot depend on the basis in which the state is expressed, as a quantum state is fully described by a unique density operator independent of a choice of basis for the Hilbert space~\cite{NC:00}.
In contrast, if a resource theory exists for an anti-unitary representation,
then any corresponding measure of frameness necessarily  depends on the basis choice.

This basis-dependent property is made clear by observing that any basis change can be expressed as a passive unitary transformation~$U$,
whereas the form of an anti-unitary operator depends on the basis choice.
As the measure~$F$ must satisfy Eq.~\eqref{eq:framenessmeasure},
the form of the measure must also depend on the choice of basis. Such a resource theory cannot be completely consistent as it would imply that the resourcefulness of a state is ambiguous 
since it can change due to a basis transformation, which is not in itself a meaningful physical operation.

As an example, consider the case of a reference frame for the direction of time~\cite{Gour:2009ly}.
The relevant group of transformations is isomorphic to the finite group of two elements~$\mathbb{Z}_2$
with the representation corresponding to time-reversal symmetry which is necessarily anti-unitary.
For the~$\mathbb{Z}_2$ class of transformations,
resource states are qubits (two-level systems),
and the corresponding frameness measure is~\cite{Gour:2009ly}
\begin{equation}
\label{eq:tau}
	\tau(\ket{\psi})=1-\left|\bra{\psi^*}\left.\psi\right>\right|
\end{equation}
with~$\ket{\psi^*}$ denoting the state obtained by conjugating all coefficients for the state $\ket{\psi}$.

For the anti-unitary representation,
the frameness measure given by Eq.~\eqref{eq:tau} is clearly basis dependent.
In particular, for a given $\left\{\ket{0}, \ket{1}\right\}$ basis, where
\begin{equation}
	\ket{\psi}=\psi_0 \ket{0}+\psi_1 \ket{1},
\end{equation}
the value of the measure is equal to 
\begin{equation}
	\tau(\ket{\psi})=1-\left|\psi_0^2+\psi_1^2\right|.
\end{equation}
Therefore, we have to conclude that such a resource theory is not consistent and that, in general, a consistent resource theory cannot be developed for the case that~$\mathscr{H}$ carries an anti-unitary representation of~$G$.

In the next section we construct a unitary representation of the CPT operator for massive and massless  relativistic quantum mechanical systems of arbitrary spin $s>1$.

\section{\label{sec:unitarity} Unitary representation for CPT}

As anti-unitary representations are problematic for frameness,
in this section we devise a procedure for constructing unitary representations of the CPT operator for relativistic quantum systems of arbitrary spin $s>1$, extending the construction of~\cite{STDS13}.
Our algorithm comprises the following three steps:
\begin{enumerate}
	\item[(i)] the construction of relativistically covariant equations describing massive and massless particles of arbitrary spin; 
	\item[(ii)] a state space of solutions for these equations; and
	\item[(iii)] the construction of a unitary CPT operator acting on the state space of solutions.
\end{enumerate}
Steps~(i) and~(ii) for the case of massive particles of spins $s>1$ are given in Sec.~\ref{CPTrqm} and,
for massless particles,
in Sec.~\ref{CPTrql}.
Step~(iii) has been shown previously only for a subspace of the entire state space of solutions,
namely for all states with a fixed $\bm{p}\in\mathbb{R}^3$~\cite{STDS13}.
 In Sec.~\ref{uniextend}, we show that the CPT operator can be extended to the entire state space of 
solutions in such a way that unitarity is preserved.

\subsection{Basis States and Transformations}
\label{subsec:basis}

In the Feynman-Stueckelberg interpretation~\cite{Fey49a} the image of a particle with mass $m$, spin $s$, linear three-momentum~$\bm{p}$, and energy
\begin{equation}
\label{eq:FeynmanStueckelberg}
	E=\sqrt{|\bm{p}|^2c^2+(mc^2)^2}
\end{equation}
under the action of $CPT$ is an antiparticle of the same mass and energy with its spin and three-momentum reversed.
The particle's internal degrees of freedom,
such as electric charge, baryon number, and lepton number,
are inverted.

In terms of the universally conserved internal symmetries, the total internal quantum number is
\begin{equation}
\label{eq:u}
	u := Q + (B-L)
\end{equation}
with~$Q$ the total electric charge and $B-L$ the difference between total baryon number~$B$ and total lepton number~$L$.  We note that, whereas in some theories B and L are not individually conserved, their difference (B-L),  known as the chiral anomaly, is conserved~\cite{Bell:1969uq}.  
As~$m$ and~$E$ are CPT invariant, the particle basis state~$\ket{u,s,\bm p}$ transforms as
\begin{equation}
	CPT\ket{u,s,\bm{p}}=\text{e}^{\text{i}\theta^{\text{CPT}}_{u,s,p}} \ket{-u,-s,-\bm{p}}
\label{eq:CPTstate}
\end{equation}
with
\begin{equation}
	\theta^{\text{CPT}}_{u,s,p}=\theta^{\text{CPT}}_{-u,-s,-p}\in[0,2\pi)
\end{equation}
a global phase.

The state~$\ket{u,s,\bm{p}}$ is not a valid Hilbert-space state as it is not normalizable for $\bm{p}\in\mathbb{R}^3$
but is well defined as a distribution in the distribution space~$\Phi^*$,
which is dual to the nuclear space of test functions~$\Phi$ 
in the Gel'fand triple (or rigged Hilbert space)~\cite{GelfandVilenkin64}
\begin{equation}
	(\Phi,\mathscr{H},\Phi^*).
\end{equation}
Observables are expressed as complex-valued functionals of test functions and distributions.
Here we employ the Dirac adjoint representation to ensure covariance and unitarity throughout~\cite{Gre00}. 

To construct unitary projective representations of $\{\mathds{1},\, CPT\}$, we proceed as follows.
Consider the set of operators
\begin{equation}
	\{\mathds{1},\,C,\,PT,\,CPT\},
\end{equation}
which under composition form a unitary projective representation of the (Abelian) Klein four-group $\mathbb{Z}_2\times\mathbb{Z}_2$.  The action of $\{\mathds{1},\,C,\,PT,\,CPT\}$ with respect to the states~$\ket{u,s,\bm{p}}$
that span the space of distributions~$\Phi^*$ is given by
\begin{align}
\label{eq:subbasis}
	\{\ket{u,s,\bm{p}},
	\text{e}^{\text{i}\theta^{\text{PT}}_{u,s,p}}\ket{u,-s,-\bm{p}}=:&\;PT \ket{u,s,\bm{p}},\nonumber\\
	\text{e}^{\text{i}\theta^{\text{C}}_{u,s,p}}\ket{-u,s,\bm{p}}=:&\;C \ket{u,s,\bm{p}},\nonumber\\
	\text{e}^{\text{i}\theta^{\text{CPT}}_{u,s,p}}\ket{-u,-s,-\bm{p}}=:&\;CPT \ket{u,s,\bm{p}}\}
\end{align}
with $\theta^{\text{CPT}}_{u,s,p}=\theta^{\text{PT}}_{u,s,p}+\theta^{\text{C}}_{u,s,p}$.  Now if $T$ is a representation of a group $G$, then it is also a representation of any subgroup, $S$, of $G$.  Moreover, if $T$ is an irreducible representation of $G$, then the restriction of $T$ to $S$ yields, in general, a reducible representation.
 Indeed, the representation in Eq.~\eqref{eq:subbasis} is a projective representation of the subgroup $\{\mathds{1}, CPT\}$, which is equivalent to~$\mathbb{Z}_2$.

The states
\begin{equation}
\label{eq:usp}
	\{\ket{\pm u,\pm s,\pm\bm p}\}
\end{equation}
are the solutions to a relativistically covariant differential equation that describes a quantum-mechanical system with internal quantum number $u$, spin $s$, and three-momentum $\bm p$.  
Whereas explicit equations exist for low values of $s$, such as the Klein-Gordon equation for $s=0$, the Dirac equation for $s=1/2$, and the Weinberg-Shay-Good equation for $s=1$ {\it inter alia}~\cite{Gre00},
a universally accepted, single-particle equation does not exist for higher-spin particles.  

The most widely employed method of modeling free, massive particles of arbitrary spin $s$ is he use of the Bargmann-Wigner equations~\cite{BW48}.  We now show how these equations are used to construct the relativistic equations and corresponding state space of solutions for the case of massive particles of spin $s>1$ and fixed three-momentum $\bm p$.

\subsection{Massive relativistic particles of arbitrary spin}
\label{CPTrqm}

The Bargmann-Wigner equations model free massive particles of arbitrary spin $s$ as being composed from $2s$  spin-\textonehalf~``primitives''~\cite{BW48}.  The idea behind the Bargmann-Wigner construction can be more easily understood by considering the same situation in non-relativistic quantum mechanics.

Consider the case of $2s$ spin-\textonehalf~systems in standard non-relativistic quantum mechanics.  One can  describe any state of such a system in two equivalent ways.
One description is as a tensor product of states of each of the $2s$ systems in the uncoupled basis.
The other description uses the standard rules for addition of angular momenta
to construct the coupled basis $\{\ket{J,m,\alpha}\}$.
Here $J_0\leq J\leq s$ is the total angular momentum of the $2s$ systems
(with either~$J_0=0$ if~$s$ is an integer or $J_0=1/2$ otherwise),
$-J/2\leq m\leq J/2$ is the projection of the total spin of the $2s$ systems on the $z$ axis,
and $\alpha$ is the degeneracy (multiplicity) index that indicates the number of ways $2s$ spin-\textonehalf~systems can be coupled to produce a single spin-$J$ system.
The Clebsch-Gordan transform connects these coupled and uncoupled bases~\cite{BCH06}.

In the coupled basis description of the $2s$ quantum systems, the subspace with $J=s$, known as the \defn{totally symmetric subspace}, is a $2s+1$-dimensional space, with orthonormal basis
\begin{equation}
	\{\ket{s,M}\}_{M=-s}^{s}\,,
\end{equation}
which can be thought of as the state space of a spin-$s$ particle.
If each spin-\textonehalf~primitive is a positive energy solution of the Dirac equation
(i.e.,~the spin-\textonehalf~primitive is described by the state $\ket{u,\pm 1/2,\bm{p}=\bm{0}}$ where $\bm{p}=\bm{0}$ implies that the solution is with respect to the particle's rest frame),
the symmetric subspace generated by such $2s$ systems corresponds to the positive energy solution of a particle with spin $s$.
In other words
\begin{equation}
	\ket{u, M, \bm{p}=\bm{0}}, \, M\in(-s,\ldots,s)
\end{equation}
where again the solutions are with respect to the rest frame of the spin-\textonehalf~primitives.  By a similar argument the totally symmetric subspace of $2s$ spin-\textonehalf~negative-energy solutions of the Dirac equation corresponds to the negative energy solution of a particle with spin $s$.  

The Bargmann-Wigner construction utilizes the idea outlined above to construct both equations and solutions that describe relativistic particles of arbitrary spin. As each spin-\textonehalf~primitive satisfies the Dirac equation, a spin-$s$ particle satisfies a set of such $2s$ individually indexed Dirac equations. Furthermore, as each spin-\textonehalf~primitive is Lorentz covariant
(by virtue of the Dirac equation being Lorentz covariant),
the spin-$s$ particle also satisfies Lorentz covariance provided the $2s$-fold tensor product representation of the Lorentz group is restricted to the permutationally symmetric subspaces of positive- and negative-energy solutions, respectively. 
 
The Bargmann-Wigner construction leads to a set of $2s$ individually indexed Dirac equations, one for each spin-\textonehalf~primitive, whose positive- and negative-energy solutions describe particles and antiparticles of spin $s$.  Using Eq.~\eqref{eq:subbasis} we proceed to construct a unitary representation of the Klein group which, upon restriction to the subgroup $\{\mathds{1},\, CPT\}$,
yields a unitary representation of the CPT operator.

Specifically, for a spin-$s$ system the unitary CPT operator is a $4(2s+1)\times 4(2s+1)$ matrix with $1s$ in its anti-diagonal. We illustrate our method with the simplest case of constructing a relativistic, massive spin-1 particle out of two massive spin-\textonehalf~primitives.  The latter satisfy the Dirac equation
\begin{equation}
\label{eq:Dirac}
	\left(i\hbar\gamma^\mu\partial_\mu+mc\right)\psi=0
\end{equation}
for Dirac matrices
\begin{equation}
	\gamma^0=\begin{pmatrix}\mathds{1}&0\\0&-\mathds{1}\end{pmatrix},
	\gamma^j=\begin{pmatrix}0&\sigma^j\\-\sigma^j&0\end{pmatrix}
\label{eq9}
\end{equation}
with~$\sigma^j|_{j\in(1,2,3)}$ the Pauli matrices.

For a given three-momentum $\bm p$ the solutions to Eq.~\eqref{eq:Dirac} form an eight-dimensional space spanned by 
\begin{equation}
\label{eq:8d}
	\{\ket{\pm u_{1/2},\pm1/2,\pm\bm{p}}\},
\end{equation}
where the states $\{\ket{u_{1/2},\pm 1/2, \pm \bm p}\}$ correspond to particle states and $\{\ket{-u_{1/2},\pm 1/2, \pm \bm p}\}$ correspond to antiparticle sates.  Using the Bargmann-Wigner construction the state space of a massive (anti-)particle of spin $1$ of definite three-momentum $\bm p$ is spanned by
\begin{widetext}
\begin{align}\nonumber
	\ket{\pm u_1,1,\pm\bm p}
		:=&\ket{\pm u_{1/2},1/2,\pm\bm p}\otimes \ket{\pm u_{1/2},1/2,\pm\bm p}\nonumber\\
	\ket{\pm u_1,0,\pm\bm p}
		:=&\frac{1}{\sqrt{2}}\left(\ket{\pm u_{1/2},1/2,\pm\bm p}\otimes \ket{\pm u_{1/2},-1/2,\pm\bm p}
			+\ket{\pm u_{1/2},-1/2,\pm\bm p}\otimes \ket{\pm u_{1/2},1/2,\pm\bm p}\right)\\ 
	\ket{\pm u_1,-1,\pm\bm p}
		:=&\ket{\pm u_{1/2},-1/2,\pm\bm p}\otimes \ket{\pm u_{1/2},-1/2,\pm\bm p}.\nonumber
\label{eq:spin-1}
\end{align}
\end{widetext}
Note that states that are linear superpositions of $\bm p$ and $-\bm p$ are forbidden as such states describe a spin-$1$ particle of indefinite three-momentum.

The CPT transformation acting on the state space describing the solutions of massive, relativistic, spin-1 systems is represented as a~$4(2\cdot 1+1)\times 4(2\cdot 1+1)=12\times 12$ matrix  with ones on the anti-diagonal. As overall phases can be ignored the $12\times 12$ matrices $\{\mathds{1}, CPT\}$ form a \defn{projective representation} of $\mathbb{Z}_2$~\cite{CPW02}.

\subsection{\label{CPTrql} State space for massless particles of arbitrary spin}

We now carry out the Bargmann-Wigner construction for the case of relativistic massless particles. As we show below, for massless particles, chirality is a Lorentz invariant quantity.  However, as the chirality of the massless spin-$s$ particle, as well as the chirality of each of the constituent, massless spin-\textonehalf~primitives must be Lorentz-invariant, we show that the state space of a massless particle is eight-dimensional irrespective of its spin.

Unlike for massive particles, no frame of reference exists such that $\bm{p}=\bm{0}$ for massless particles. Instead we choose the $z$ axis of the particle to be co-linear with the direction of the momentum such that
\begin{equation}
	(p^{\mu})=(p^{0},0,0,p)=(p^{0},\bm p).
\label{eq:MLmom}
\end{equation}  
The Dirac equation for a massless spin-\textonehalf~particle whose momentum is given by Eq.~\eqref{eq:MLmom} reads 
\begin{equation}
	\text{i}\hbar\gamma^0\gamma^3\partial_0\partial_3\psi=0,
\label{Diracmassless}
\end{equation}
and the state space of solutions is again spanned by $\{\ket{\pm u_{1/2},\pm1/2,\pm\bm{p}}\}$. The fact that the particles are massless means that if $\psi$ is a solution of Eq.~\eqref{Diracmassless},
so is $\gamma^5\psi$,
where $\gamma^5:=\text{i}\gamma^0\gamma^1\gamma^2\gamma^3$ is the chiral operator
and can be shown to satisfy
\begin{equation}
	\left[\gamma^5,\sigma^{\alpha\beta}\right]=0\,\forall\, \alpha,\,\beta,\;
	\sigma^{\alpha\beta}=\frac{i}{2}[\gamma^\alpha,\gamma^\beta].
\label{eq:Dirac_invariant}
\end{equation}
If the algebra of Dirac operators is irreducible, then Schur's lemmas imply that the only matrix that commutes with every Dirac operator is a multiple of the identity~\cite{CPW02}.

As $\gamma^5$ is not a multiple of the identity, the Dirac operators can be further reduced into their irreducible components. Consequently, the state space of solutions of Eq.~\eqref{Diracmassless} can also be decomposed into irreducible subspaces.
The irreducible subspaces of the state space of solutions of Eq.~\eqref{Diracmassless} are the eigenspaces of the projection operators 
\begin{align}\nonumber
	\psi_{L}&=1/2\left(\mathds{1}-\gamma^5\right),\\
	\psi_{R}&=1/2\left(\mathds{1}+\gamma^5\right).
\label{eq:chirality}
\end{align}
The $\psi_L$ and $\psi_R$ eigenstates of Eq.~\eqref{eq:chirality} are known as Weyl spinors
with left- and right-handed chirality respectively.
Each Weyl spinor obeys Eq.~\eqref{Diracmassless}. 
Assigning $\partial_i=\bm {p_i}$, $\gamma^0\gamma^i=\gamma^5\Sigma^i$, with
\begin{equation}
	\Sigma^i\equiv\frac{\text{i}\epsilon^{ijk}}{2}[\gamma^j,\gamma^k],
\end{equation}
and some algebra Eq.~\eqref{Diracmassless} for each Weyl spinor reads
\begin{equation}
\text{i}\hbar\left(\gamma_5-\frac{\bm\Sigma\cdot\bm p}{|\bm p|}\right)\psi_{L(R)}=0.
\label{eq:helicity=chirality}
\end{equation}
The quantity $\frac{\bm\Sigma\cdot\bm p}{|\bm p|}$ is known as the \defn{helicity} of the spinor.
For massless spin-\textonehalf~particles helicity is equal to chirality.  

Thus, the solutions to the Dirac equation for a massless spin-\textonehalf~system can be described as follows. A positive-energy (negative-energy) solution, corresponding to a particle (antiparticle), can have either positive or negative helicity. The latter requirement fixes the allowable solutions to
\begin{subequations}
\begin{align} 
&\ket{u,1/2,\bm p},\quad \ket{u,-1/2,-\bm p} \label{eq:massless_part_pos_hel} \\ 
&\ket{u,-1/2,\bm p},\quad \ket{u,1/2,-\bm p} \label{eq:massless_part_neg_hel} \\ 
&\ket{-u,1/2,\bm p},\quad \ket{-u,-1/2,-\bm p} \label{eq:massless_anti_pos_hel} \\ 
&\ket{-u,-1/2,\bm p},\quad \ket{-u,1/2,-\bm p} \label{eq:massless_anti_neg_hel}.
\end{align}
\end{subequations}
Equation~\eqref{eq:massless_part_pos_hel} describes a massless spin-\textonehalf~particle with positive helicity, whereas Eq.~\eqref{eq:massless_part_neg_hel} describes a a massless spin-\textonehalf~particle with negative helicity.  Likewise, Eqs.~\eqref{eq:massless_anti_pos_hel} and~\eqref{eq:massless_anti_neg_hel} describe an antiparticle with positive and negative helicity respectively. 

Now we use the Bargmann-Wigner construction to construct a massless spin-$s$ system out of $2s$ massless spin-\textonehalf~systems. Using the Bargmann-Wigner construction, a massless spin-$s$ particle and antiparticle are described by the symmetric subspace of $2s$ massless spin-\textonehalf~particles and antiparticles respectively.

Each massless spin-\textonehalf~primitive has the same rest frame. We require that valid states be definite eigenstates of $\bm p$, and
\begin{equation}
	S_z=\sum_{i}\sigma_z^{(i)},
\end{equation}
as well as of the total chirality operator
\begin{equation}
	\Gamma^5=\sum_i\gamma^5_i.
\end{equation}
However, as chirality for massless particles is a Lorentz invariant quantity, we require that the chirality of each individual constituent, be preserved.  This latter requirement restricts the allowable states for a massless, spin-$s$ system to those for which $S_z=\pm s$ as the next lemma shows.

\begin{lemma}
Let
\begin{equation}
	\{\ket{u_{1/2},m,p_z}\}_{m=-1/2}^{1/2}
\end{equation}
be the state space of a massless spin-\textonehalf~primitive, with $\bm p=p_z$ and $m$ the angular momentum projection onto the $z$ axis. Also let
\begin{equation}
\label{eq:us2}
	\{\ket{u_{s},M,p_z}\}_{M=-s}^{s}
\end{equation}
represent the totally symmetric subspace of $2s$ such massless spin-\textonehalf~primitives.  The only states in the totally symmetric subspace for which exist stationary  states of the local chirality operator $\gamma^5_i$ for all $i$ are the states
\begin{equation}
\label{eq:us2pms}
	\left\{\ket{u_{s},\pm s,p_z}\right\}.
\end{equation}
\label{lem:1}
\end{lemma}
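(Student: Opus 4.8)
The plan is to translate the abstract requirement ``simultaneous eigenstate of every local chirality operator $\gamma^5_i$'' into an elementary condition on the total spin projection $M$, and then to show that this condition isolates $M=\pm s$. The first step is to determine how $\gamma^5_i$ acts on the spin degree of freedom of the $i$th primitive. Since all $2s$ primitives are massless and share the common momentum $\bm p=p_z$ along the $z$ axis, Eq.~\eqref{eq:helicity=chirality} identifies the chirality of each primitive with its helicity; and for momentum along $+z$ the single-primitive helicity operator $\bm\Sigma\cdot\bm p/|\bm p|$ reduces on the physical solution space to $\sigma_z^{(i)}$. Hence $\gamma^5_i\ket{u_{1/2},\pm1/2,p_z}=\pm\ket{u_{1/2},\pm1/2,p_z}$, so that being a stationary state of $\gamma^5_i$ for every $i$ is equivalent to being a simultaneous eigenstate of every $\sigma_z^{(i)}$.

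Next I would exploit the permutation invariance of the totally symmetric states $\ket{u_s,M,p_z}$. Because such a state is unchanged under any exchange of the $2s$ primitives, the single-site expectation value $\bra{u_s,M,p_z}\sigma_z^{(i)}\ket{u_s,M,p_z}$ cannot depend on $i$; averaging over $i$ and using $S_z=\sum_i\sigma_z^{(i)}$ gives $\bra{u_s,M,p_z}\sigma_z^{(i)}\ket{u_s,M,p_z}=M/s$. Since $(\sigma_z^{(i)})^2=\mathds{1}$, the variance of $\sigma_z^{(i)}$ in this state is $1-(M/s)^2$. A vector is an eigenstate of $\sigma_z^{(i)}$ precisely when this variance vanishes; for $-s<M<s$ it is strictly positive, so $\ket{u_s,M,p_z}$ is not an eigenstate of $\gamma^5_i$, whereas for $M=\pm s$ it is.

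Finally I would confirm the converse directly. The extremal states $\ket{u_s,\pm s,p_z}$ are the fully aligned product states $\ket{u_{1/2},\pm1/2,p_z}^{\otimes 2s}$, on which every $\sigma_z^{(i)}$, and therefore every $\gamma^5_i$, acts with the definite eigenvalue $\pm1$. This shows that the states of Eq.~\eqref{eq:us2pms} are exactly the members of the totally symmetric subspace that are stationary under all local chirality operators, as claimed.

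The step I expect to be the main obstacle is the first one: justifying rigorously that the four-component operator $\gamma^5_i$ collapses to $\sigma_z^{(i)}$ on the physical sector. This requires restricting $\gamma^5$ to the on-shell, fixed-momentum space spanned by the admissible helicity solutions of Eqs.~\eqref{eq:massless_part_pos_hel}--\eqref{eq:massless_anti_neg_hel}, and checking that the particle/antiparticle label and the sign of the momentum are handled consistently, so that the chirality-to-$z$-spin identification holds uniformly across all $2s$ tensor factors. Once this reduction is secured, the symmetry-plus-variance argument is routine.
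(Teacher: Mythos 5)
Your proof is correct, but it takes a genuinely different route from the paper's. The paper's proof expands each symmetric state $\ket{u_s,M,p_z}$ explicitly as a symmetrized (Dicke) sum of product states, computes the single-primitive reduced density matrix $\rho_1$ as a binomial mixture of $\ketbra{u_{1/2},1/2,p_z}{u_{1/2},1/2,p_z}$ and $\ketbra{u_{1/2},-1/2,p_z}{u_{1/2},-1/2,p_z}$, and observes that $\rho_1$ is pure---equivalently, the global state is separable---if and only if $M=\pm s$; definite local chirality then fails for every mixed $\rho_i$. You replace this combinatorial computation with a variance argument: permutation invariance forces $\bra{\psi}\sigma_z^{(i)}\ket{\psi}=M/s$ at every site, and since $(\sigma_z^{(i)})^2=\mathds{1}$ the variance is $1-(M/s)^2$, which vanishes precisely at $M=\pm s$; because vanishing variance of a Hermitian operator is equivalent to being an eigenstate, necessity and sufficiency follow in one stroke. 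Your route is more elementary---no Dicke expansion, no reduced density matrices, no binomial bookkeeping (where, incidentally, the paper is sloppy: it sums over $\pi\in S_s$ and normalizes by $\binom{s}{k}$ although $2s$ objects are being permuted, and writes $M=s-2k$ where $M=s-k$ is meant)---and it applies verbatim to any permutation-invariant state. What the paper's computation buys in exchange is the explicit form of the local state and the physical gloss that the non-extremal symmetric states are entangled, which is the lens through which the paper interprets the lemma. Finally, the step you flag as the main obstacle---collapsing $\gamma^5_i$ to $\sigma_z^{(i)}$ on the physical sector---is not treated any more rigorously in the paper: there the single-primitive states $\ket{u_{1/2},\pm 1/2,p_z}$ are simply taken to be eigenstates of $\gamma^5_i$ on the strength of the helicity--chirality identity of Eq.~\eqref{eq:helicity=chirality}, so your justification is at least as careful as the one it would replace.
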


\begin{proof}
We begin by first showing that the states of Eq.~\eqref{eq:us2} have definite total chirality, i.e.,~are eigenstates of $\Gamma^5$.
We then show that the only non-entangled states in the set of states of Eq.~\eqref{eq:us2} are the states with $M=\pm s$ and that these states are the only states which, upon tracing all but the $i$th system,  result in an eigenstate of the local chirality operator $\gamma^5_i$ for all $i$. 

From the theory of angular momentum~\cite{Sakurai:1994fk} the symmetric state $\ket{u_{s},M,p_z}$ with $M=s-2k$ can be written as
\begin{align}\nonumber
	\ket{u_{s},M,p_z}=&\frac{1}{\sqrt{\binom{s}{k}}}\sum_{\pi\in S_s}\pi
		\left(\ket{u_{1/2},1/2,p_z}^{\otimes (s-k)}\right.\\
	&\left.\otimes\ket{u_{1/2},-1/2,p_z}^{\otimes k}\right)
\label{eq:dickestates}
\end{align}
with~$\ket{a}^{\otimes 2}:=\ket{a}\otimes\ket{a}$,
and the sum in Eq.~\eqref{eq:dickestates} is over all permutations $\pi\in S_s$ of $2s$ objects that result in a unique re-ordering of the $2s$ systems.

As each summand in Eq.~\eqref{eq:dickestates} contains $s-k$ eigenstates of $\gamma^5_i$ with $+1/2$ eigenvalue, and $k$ eigenstates of $\gamma^5_i$ with $-1/2$ eigenvalue
\begin{equation}
\Gamma^5\ket{u_{s},M,p_z}=M\ket{u_{s},M,p_z}.
\end{equation}
Hence, the states of Eq.~\eqref{eq:us2} are eigenstates of $\Gamma^5$ with eigenvalue $M$.  

We now determine the state of one of the constituent massless spin-\textonehalf~systems by tracing out all the remaining spin-\textonehalf~systems in Eq.~\eqref{eq:dickestates}.  Without loss of generality we may choose to keep the first system in Eq.~\eqref{eq:dickestates}.  For any given $M$ in Eq.~\eqref{eq:dickestates} the first massless system is in the state
$\ket{u_{1/2},1/2, p_z}$ $\binom{s-1}{s-k-1}$ times, and in the state $\ket{u_{1/2},-1/2, p_z}$ $\binom{s-1}{k-1}$ times.  Hence the reduced density matrix, $\rho_1$, for the first massless system can be easily shown to be given by
\begin{align}\nonumber
\rho_1=&\frac{\binom{s-1}{s-k-1}}{\binom{s}{k}}\ketbra{u_{1/2},1/2, p_z}{u_{1/2},1/2, p_z}\\
&+\frac{\binom{s-1}{k-1}}{\binom{s}{k}}\ketbra{u_{1/2},-1/2, p_z}{u_{1/2},-1/2, p_z}.
\label{eq:reduced_state}
\end{align}
Whereas we have computed the reduced density matrix only for the first massless system, Eq.~\eqref{eq:reduced_state} is true for any massless spin-\textonehalf~system as all states we are considering are permutationally symmetric.

Now $\rho_1$ is pure if and only if $k=0$ or $s$, which correspond to $M=\pm s$, or else $\rho_1$ is mixed.  As the entropy of the reduced density matrix of a system is a measure of entanglement, it follows that the states
$\ket{u_{s},\pm M,p_z}$ for $M=\pm s$ are separable, whereas all other states are entangled.  

We now show that the only states that satisfy local invariance of chirality are the states with $M=\pm s$. 
As each massless spin-\textonehalf~system has the same rest frame, and as the chirality of each constituent system is a conserved quantity, it follows that the only allowable states in the symmetric subspace are those for which the reduced states $\rho_i$ are eigenstates of $\gamma^5_i$ with a definite eigenvalue for all $i$. From Eq.~\eqref{eq:reduced_state} this occurs only for the states given by Eq.~\eqref{eq:us2}.  This completes the proof.
\end{proof}

Lemma~\ref{lem:1} shows that there are only two allowable states for a massless spin-$s$ particle of definite momentum $\bm p$.  Similarly a massless spin-$s$ antiparticle of definite three-momentum, $\bm p$, can only possess two possible spin states. Taking into account that each massless particle and antiparticle can have momentum $\bm p$ or $-\bm p$ the space describing all possible valid states of a massless spin-$s$ system using the Bargmann-Wigner construction is eight-dimensional, independent of the spin of the system. The corresponding CPT operator is represented by an $8\times 8$ matrix with ones on the anti-diagonal. These matrices $\{\mathds{1}, CPT\}$ form a projective representation of $\mathbb{Z}_2$.

Let us illustrate our construction for the simplest case of building a massless spin-$1$ system out of two massless spin-\textonehalf~primitives.
Thus, the state space of spin-$1$ particles is spanned by 
\begin{subequations}
\begin{align}
	\ket{u_1,\pm1,\pm \bm p}&:=\ket{u_{1/2},\pm1/2,\pm \bm p}\otimes\ket{u_{1/2},\pm 1/2,\pm \bm p} \label{eq:massless_spin-1_pos_hel} \\
	\ket{u_1,\mp 1,\pm \bm p}&:=\ket{u_{1/2}, \mp1/2,\pm\bm p}\otimes\ket{u_{1/2}, \mp1/2,\pm\bm p}, \label{eq:massless_spin-1_neg_hel}
\end{align} 
\end{subequations}
where Eq.~\eqref{eq:massless_spin-1_pos_hel} describes particles with positive helicity, 
whereas Eq.~\eqref{eq:massless_spin-1_neg_hel} describes particles with negative helicity.  The corresponding states for the antiparticle are obtained by replacing $u_{1/2}$ with $-u_{1/2}$ in Eqs.~\eqref{eq:massless_spin-1_pos_hel} and~\eqref{eq:massless_spin-1_neg_hel}.

Note that a massless spin-$1$ particle can have its spin, which corresponds to polarization, either parallel or anti-parallel to its direction of motion, representing states of positive and negative helicity respectively. The lack of longitudinal polarization arises naturally  in this construction from the fact that the state 
\begin{align}
\ket{u_1,0,\bm p}&:=1/2\left(\ket{u_{1/2},1/2,\bm p}\otimes\ket{u_{1/2},-1/2,\bm p}\right.\nonumber \\
&\left.+\ket{u_{1/2},-1/2,\bm p}\otimes\ket{u_{1/2},1/2,\bm p}\right)\nonumber
\end{align}
is an eigenstate of $\Gamma^5$, but not of $\gamma^5_1$ or $\gamma^5_2$. 

Finally, the $8\times 8$ matrices $\{\mathds{1}, CPT\}$ form a projective representation of $\mathbb{Z}_2$. The eigenstates of CPT physically correspond to linear superpositions of states with opposite helicity.  We remark that the construction given here gives rise to the same state space and CPT operator as the construction in~\cite{STDS13}, where the state space and corresponding CPT operator were obtained by considering the Bia{\l}ynicki-Birula--Sipe  equation~\cite{Bialynicki-Birula:1994fk,Bialynicki-Birula:1995fk,Bialynicki-Birula:1996uq,Sipe:1995uq,Kobe:1999ve,Smith:2007kx,Raymer:2008zr}.

Thus far we have proven that our construction of the CPT operator is unitary 
only on the sub-basis defined by a definite value three-momentum $\bm p$.
In the next sub-section we extend the domain of the CPT operator to the generalized functions of~$\bm p$,
thereby proving that this extension retains unitarity.

\subsection{\label{uniextend} Extending CPT to general three-momentum support}

In this subsection we show that the $CPT$ operator constructed in  Sec.~\ref{subsec:basis} is well defined even with support over a generalized function of~$\bm p$.   
Our strategy is first to consider the action of the CPT operator on the test functions in~$\Phi$ and then to extend the operator to the continuous basis in~$\Phi^*$. We next derive the unitarity of the extension of CPT from the unitarity of its reduction on the test functions. Finally, we show how the projections of the CPT operator onto subspaces of fixed $\bm{p}$ are each unitary but normalized by the delta function.   

The inner product of an arbitrary state $\ket{\phi} \in \Phi$ 
and distribution $\bra{u,s,\bm{p}} \in \Phi^*$ is
\begin{equation}
	\bracket{u,s,\bm{p}}{\phi}=\phi(u,s,\bm p),
\label{eq:dist-action}
\end{equation}
which is a smooth, rapidly decreasing test function such that
\begin{equation}
	\int_{\mathbb{R}^3}\mathrm{d}\bm p\left|\bm p^n \phi(u,s,\bm p)\right|^2<\infty. 
\end{equation}
for all $n=0,1,2,\ldots$~\cite{Bohm:86}. 
The function $\phi(u,s,\bm p)$ is the complex amplitude of finding the physical system in state $u$, with spin $s$, and three-momentum $\bm p$.  
The state $\ket{\phi}$ can be expanded in terms of the continuous basis states $\left\{\ket{u,s,\bm{p}}\right\}$ in the space of distributions $\Phi^{*}$ as 
\begin{align}
	\ket{\phi}
		&=\sum_{u,s}\int\mathrm{d}\bm p \: \phi(u,s,\bm p)\ket{u, s, \bm p}
\label{testfunction}
\end{align}
 
Consider the reduction of the  $(CPT)_{\Phi}$ operator on the nuclear space~$\Phi$ and its action on a test function $\ket{\phi}$.
From Eq.~(\ref{testfunction}), we obtain 
\begin{equation}
	\bra{u,s,\bm{p}}(CPT)_{\Phi}\ket{\phi}=\text{e}^{\text{i}\theta^{\text{CPT}}_{u,s,p}} \phi(-u,-s,-\bm p). 
\label{eq:CPTdagger-sandwich}
\end{equation}
We can extend the CPT operator to the space $\Phi^*$ of distributions via
\begin{align}
\bracket{(CPT)^{\dagger}(u,s,\bm{p})}{\phi}:=\bra{u,s,\bm{p}}(CPT)_{\Phi}\ket{\phi},
\end{align}
which,
together with Eqs.~(\ref{eq:CPTdagger-sandwich}) and~(\ref{eq:dist-action}),
yield the formal extension~\cite{Bohm:86, Madrid02}:
\begin{align}
	\ket{u,s,\bm{p}} (CPT)^{\dagger}= \text{e}^{-\text{i}\theta^{\text{CPT}}_{u,s,p}}\ket{-u,-s,-\bm p}, 
\label{eq:CPT-dagger}
\end{align}
Note that  $(CPT)_{\Phi}$  is continuous and unitary in the nuclear space $\Phi$.  As the latter is dense in $\mathscr H$, the operator can be extended to a unitary operator in $\mathscr H$, and the adjoint of such an operator on $\Phi^*$ has a complete system of generalized eigenvectors in the rigged Hilbert space~\cite{GelfandVilenkin64}. 
The unitarity of $(CPT)_{\Phi}$ on $\Phi$ implies that 
\begin{align}
	(CPT)^{\dagger} \:(CPT)_{\Phi}
		=\mathds{1}_{\Phi}, 
\end{align}
where $\mathds{1}_{\Phi}$ is the identity operator in~$\Phi$. 

Given that the space $\Phi^*$ is a vector space in its own right, we can likewise define the action of the CPT-operator on the kets labeled by $\ket{u,s,\bm{p}}$ from Eq.~(\ref{eq:CPT-dagger}) as 
\begin{align}
	CPT\ket{u,s,\bm{p}}= \text{e}^{+\text{i}\theta^{\text{CPT}}_{u,s,p}}\ket{-u,-s,-\bm p}
\label{formal_identities}
\end{align}
(or, equivalently, we could have followed a similar derivation from the restricted adjoint operator $(CPT)_{\Phi}^{\dagger}$ instead).
Finally, we can restrict the  resulting $CPT$ operator (over $\Phi^*$) to the subspace spanned by basis states with fixed magnitude of momentum equal to $|\bm{p}|$ as 
\begin{align}
	(CPT)_{\bm p}
		:=\sum_{u,u',s,s'} \sum_{\bm{p'}=\pm \bm{p}}&\bra{u,s,\bm{p}} CPT\ket{u',s',\bm{p'}}
				\nonumber \\&
			\times \ketbra{u',s',\bm{p'}}{u,s,\bm{p}}. 
\label{cptrigged}
\end{align} 
As the CPT transformation does not mix test functions evaluated at different $|\bm {p}|$ among each other, and thus cannot be used to create linear superpositions of  basis elements $\ket{u,s,\bm{p}}$ associated with different $|\bm {p}|$, the unitarity of the $CPT$ operator carries over to its reduction $(CPT)_{\bm{p}}$. In particular, we have   
\begin{align}
 (CPT)_{\bm p'} \: (CPT)_{\bm p}^{\dagger} =\delta(p-p')\mathds{1},
 \label{eq:delta-unitary}
\end{align}
and each $(CPT)_{\bm{p}}$ is, by itself, unitary on the subspace spanned by basis states with fixed $|\bm{p}|$.

\section{CPT Resource theory for a CPT frame of reference}
\label{sec:resourcetheory}

In this section we formulate the resource theory associated with the lack of a reference frame for matter and antimatter.  Specifically, we distinguish CPT resource states from non-resource states.  In addition we quantify a physical system's ability to act as a CPT reference frame using an information-theoretic, operational measure that assigns zero frameness to non-resource states and establishes a consistent, unambiguous hierarchy among all resource states.

In Sec.~\ref{sec:SSRs} we showed that an SSR is equivalent to the lack of a reference frame that is associated with a symmetry with respect to a group $G$ of transformations.  If $G=\mathbb{Z}_2$ then, as we showed in Sec.~\ref{sec:unitarity}, the representation $\{\mathds{1},\, CPT\}$ is a unitary representation of $\mathbb{Z}_2$ on the space of distributions $\Phi^*$.

As $CPT$ transforms matter into antimatter and vice versa, the representation $\{\mathds{1},\, CPT\}$ describes the transformations of a reference frame associated with matter and antimatter.  Therefore, a CPT-SSR is associated with the lack of a common reference frame for matter and antimatter.

Due to Schur's lemmas~\cite{CPW02}, unitary representations of finite groups can be fully reduced into their irreducible representations (IRs).  In particular, $\mathbb{Z}_2$ has two one-dimensional IRs given by $\pm$.
As CPT-SSR implies that coherent superpositions between eigenstates of the $CPT$ operator cannot be observed~\cite{Gour:2008vn}, the space of distributions $\Phi^*$ of any system subject to CPT superselection may conveniently be written as 
\begin{equation}
	\Phi^*\cong\bigoplus_{\epsilon\in\{\pm\}}\Phi^{*(\epsilon)},
\label{eq:cptdirectsum}
\end{equation}
with IR label $\epsilon$ denoting the two inequivalent IRs of $\mathbb{Z}_2$,
and $\Phi^{*(\epsilon)}$ denoting the corresponding eigenspaces.

Equation~\eqref{eq:cptdirectsum} is identical to Eq.~\eqref{Eq:irrep} with $\mathscr{H}$ replaced by 
the space of distributions~$\Phi^*$, which is a (continuous) sum of all $\Phi^*_{\bm{p}}$. Thus, the space of distributions is partitioned into $\Phi^{*(\pm)}$
corresponding to the eigenspaces spanned by the distributions of $CPT$ eigenvectors with positive and negative eigenvalues, respectively. 

The states that can be prepared in the absence of a CPT frame of reference, the non-resource states, are test functions that belong in either $\Phi^{(+)}$ or $\Phi^{(-)}$,
which are dual to the spaces $\Phi^{*(+)}$ or $\Phi^{*(-)}$, respectively.
Hence, a linear superposition of eigenstates of $CPT$ is a resource and can be brought, via CPT invariant operations, to the standard form
\begin{equation}
	\ket{\psi}=\sqrt{q_0}\ket{+}+\sqrt{q_1}\ket{-},\,
	q_0\in [0,1],\,
	q_1=1-q_0,
\label{eq:standard}
\end{equation}
with $\ket{\pm}$ arbitrary states from $\Phi^{(\pm)}$.

The important point is that the state in Eq.~(\ref{eq:standard}) is a superposition of two states
chosen from two $\mathbb{Z}_2$ IR labels~$\pm$.
For simplicity we can consider the state as being in a fixed momentum state,
i.e., a plane wave.
As a perfect plane wave is unphysical, a more realistic treatment would have the state
prepared in a wavepacket with support over a continuum of momentum values~$\bm{p}$.

We now introduce a frameness monotone, the alignment rate $R(\psi)$, to quantify the resourcefulness of the state $\ket{\psi}$ in Eq.~\eqref{eq:standard}.
The alignment rate is an information-theoretic, operational measure that quantifies the average amount of classical information about a matter-antimatter reference frame in a reference-frame alignment protocol~\cite{SG12}.  

For reference frame alignment, two parties---Alice and Bob---each possessing their own matter-antimatter reference frame,
seek to align their corresponding reference frames by exchanging  the appropriate quantum-mechanical systems.  
For this purpose,
Alice prepares~$N$ copies of a quantum system in an initial state, $\ket{\psi(g)},\, g\in G$,
which contains information about her reference frame.
She sends~$N$ copies of this system to Bob who performs a measurement relative to his frame of reference.  

Bob's measurement outcome $g'\in G$ serves as a guess for Alice's reference frame.  The success of the protocol is quantified by a suitable figure of merit, $f(g,g')$.  Our goal is to determine the state $\ket{\psi}$ prepared by Alice, and the measurement performed by Bob,
such that $f(g,g')$ is maximized. 

As the symmetry group corresponding to a matter-antimatter reference frame is $\mathbb{Z}_2$, reference frame alignment amounts to Alice and Bob determining which $g\in\mathbb{Z}_2$ relates their corresponding frames of reference.  If the states $\ket{\psi}^{\otimes N}$ and $(CPT\ket{\psi})^{\otimes N}$ are orthogonal, then there exists a measurement that perfectly distinguishes them. Hence, upon performing such a measurement, Bob can infer with certainty which $g\in\mathbb{Z}_2$ relates his and Alice's reference frame.

Greater distinguishability between $\ket{\psi}^{\otimes N}$ and $(CPT\ket{\psi})^{\otimes N}$
implies more classical information can be accessed by Bob about $g\in G$.  
Thus, a figure of merit should be a function that quantifies the amount of information Bob learns about Alice's frame of reference. The alignment rate~\cite{SG12} quantifies the amount of information Bob learns, on average, about Alice's frame of reference per copy of the state $\ket{\psi}$ in the limit that 
Alice transmits asymptotically many copies. 

For the unitary representations of $\mathbb{Z}_2$,
the alignment rate is~\cite{SG12}
\begin{equation}
	R(\psi)=-2\log\left|q_0-q_1\right|.
\label{eq28}
\end{equation}
For non-resource states,
i.e., for the $CPT$ eigenstates with either $q_0$ or $q_1$ equal to zero,
$R=0$ indicating that these states carry zero frameness. Furthermore, for non-resource states,
the alignment rate is a monotonically increasing function and is effectively infinite for states with $q_0=q_1=1/2$. The latter are perfect tokens of a matter-antimatter reference frame as, in this case, the pair of states
$\{\ket{\psi},CPT\ket{\psi}\}$ are perfectly distinguishable by suitable measurements.  Such perfect tokens correspond, for example, to forward propagating particles with internal symmetry, $u$ and spin $s$. 

In the next section we demonstrate that Alice and Bob can even perform quantum information processing in the absence of a shared matter-antimatter reference frame, and without the need to establish such a shared frame of reference.

\section{\label{sec:quantuminfo} Quantum information without a commonly shared matter-antimatter frame of reference}

In this section we show that quantum information processing can be performed despite a CPT-SSR by exploiting the degeneracy of the $\pm1$ eigenvalues of the CPT operator.
Similar to the cases of spin-$0$, $\tfrac{1}{2}$, and~$1$~\cite{STDS13},
we consider the case that Alice prepares a linear superposition of $+1$ eigenstates of the CPT operator for a massive spin-$s$ system, i.e.,
\begin{equation}
	\ket{\psi}=\sum_{j=0}^{2(2s+1)} \alpha_j\ket{+,j,\bm p},
\label{eq:dfs}
\end{equation}
where the states $\{\ket{+,j,\bm p}\}_{j=0}^{2(2s+1)}$ span the $2(2s+1)$-dimensional eigenspace corresponding to the +1 eigenvalue of the CPT operator.  As this state is a $CPT$ eigenstate for all $\alpha_j\in\mathbb{C}$, Bob's state is represented exactly the same as Alice's. Choosing the coefficients $\alpha_j$ appropriately,
Alice can encode $\log_2 2(2s+1)$ logical qubits of information, which Bob can retrieve by an appropriate decoding without having to pre-establish a shared frame of reference for matter and antimatter. Note that if Alice and Bob use massless spin-$s$ systems, then the same protocol as above allows Alice to transmit $\log_2 4=2$ qubits of information regardless of the spin of the massless particle.

The ability to encode and decode information in the manner described in Eq.~\eqref{eq:dfs} arises purely from the degeneracy of the $\pm 1$ eigenvalues of the CPT operator.  Specifically, recall that by Eq.~\eqref{eq:cptdirectsum} the space of distributions $\Phi^*$
can be conveniently decomposed into a direct sum of spaces $\Phi^{*(\pm)}$.
Furthermore, each such subspace can be decomposed into a \defn{virtual} tensor product 
\begin{equation}
	\Phi^{*(\pm)}=\cM^{*(\pm)}\otimes\cN^{*(\pm)},
\label{eq:virtualtensorproduct}
\end{equation}
where $\cM^{*(\pm)}$ is the space upon which the IRs~$T^{(\pm)}$,
present in the representation $T$ of $\mathbb{Z}_2$,
act,
and $\cN^{*(\pm)}$ is the space upon which the trivial (identity) representation of $\mathbb{Z}_2$ acts. Note that the spaces $\cM^{*(\pm)}$ and $\cN^{*(\pm)}$ do not correspond to physical systems, which is why the tensor product in Eq.~\eqref{eq:virtualtensorproduct}  is called a virtual tensor product~\cite{Zan01}.

As~$\mathbb{Z}_2$ has only two IRs, both of which are one dimensional, the dimension of $\cM^{*(\pm)}$ is 1.  
Thus, the dimension of the space $\cN^{*(\pm)}$ is equal to the degeneracy of the $\pm$ eigenvalues of $CPT$.

As~$CPT$ acts trivially on the subspaces $\cN^{*(\pm)}$, such subspaces are known as decoherence-free, or noiseless, subspaces~\cite{ZR97} and have been used extensively in quantum information for constructing error-avoiding codes~\cite{Kempe01}. The amount of quantum information that can be encoded in such decoherence-free subspaces is equal to the logarithm of their dimension.  Hence, as long as the $CPT$ operator contains at least one degenerate eigenvalue in its spectrum, quantum information processing can be performed even with a CPT-SSR.

\section{Conclusion}
\label{sec:conc}

We have studied CPT SSRs and developed a general theory of CPT frameness
that greatly extends our prior work on CPT frameness~\cite{STDS13}.
Our earlier study was restricted to particles with spin $s\leq 1$ and assumed that non-normalizable states
had some precise value  of three-momentum.
Here we employ the Bargmann-Wigner approach to develop a procedure for constructing a
unitary CPT operator for any spin~$s$ and any generalized three-momentum.
We show that this CPT operator is always unitary,
and that its isomorphism to $\mathbb{Z}_2$ leads to single-qubit resource states.

\acknowledgements
B.~T.~thanks I.\ Marvian and P.\ Aniello for helpful discussions.
M.~S.~acknowledges support from the Austrian Science Fund, Grants No.~P24273-N16, and No.~SFB F40-FoQus F4012-N16. 
I.~T.~D.~acknowledges financial support from Foundational Questions Institute (FQXi). B.~C.~S.~acknowledges financial support from Alberta Innovates Technology Futures (AITF) and Natural Sciences and Engineering Research Council of Canada (NSERC).

\bibliography{cpt}
\end{document}